\documentclass[10pt,english,conference]{IEEEtran}

\IEEEoverridecommandlockouts

\usepackage{amsthm,amsmath,amssymb}
\usepackage{graphicx}
\usepackage{cite}
\usepackage{kbordermatrix}

\usepackage{tikz}
\usetikzlibrary{arrows,matrix}

%
%
%

\newcommand{\ie}{i.e.}

\newcommand{\eg}{e.g.}

\newcommand{\mc}[1]{\mathcal{#1}}

\newcommand{\mbb}[1]{\mathbb{#1}}

\newcommand{\set}[1]{\mathcal{#1}}
\newcommand{\compl}[1]{\overline{#1}}

\DeclareMathOperator{\ind}{ind}
\DeclareMathOperator{\lind}{lind}
\DeclareMathOperator{\vlind}{\overrightarrow{\lind}}
\DeclareMathOperator{\minrank}{minrank}

\newtheorem{lemma}{Lemma}
\newtheorem{theorem}{Theorem}
\newtheorem{corollary}{Corollary}

\newtheorem{definition}{Definition}

\newtheorem{example}{Example}




\author{\IEEEauthorblockN{Javad B. Ebrahimi\IEEEauthorrefmark{1}, Mahdi~Jafari~Siavoshani\IEEEauthorrefmark{2}}
\IEEEauthorblockA{\IEEEauthorrefmark{1}Institute of Network Coding,
Chinese University of Hong Kong, Hong Kong\\
\IEEEauthorrefmark{2}Computer Engineering Department, Sharif University of Technology, Tehran, Iran\\
Email: \text{javad@inc.cuhk.edu.hk}, \text{mjafari@sharif.edu}}
\thanks{The work described in this paper was partially supported by a grant from University Grants Committee of the Hong Kong Special Administrative Region, China (Project No. AoE/E-02/08).}
}

\title{On Index Coding and Graph Homomorphism}

\begin{document}
\maketitle

\begin{abstract}
In this work, we study the problem of \emph{index coding} from graph homomorphism perspective. We show that the minimum broadcast rate of an index coding problem for different variations of the problem such as non-linear, scalar, and vector index code, can be upper bounded by the minimum broadcast rate of another index coding problem when there exists a homomorphism from the complement of the side information graph of the first problem to that of the second problem. As a result, we show that several upper bounds on scalar and vector index code problem are special cases of one of our main theorems.

For the linear scalar index coding problem, it has been shown in \cite{YosBirkJayKol-IT11} that the binary linear index of a graph is equal to a graph theoretical parameter called \emph{minrank} of the graph. For undirected graphs, in \cite{ChlamHaviv-CoRR11} it is shown that $\minrank(G)=k$ if and only if there exists a homomorphism from $\bar{G}$ to a predefined graph $\bar{G_k}$. Combining these two results, it follows that for undirected graphs, all the digraphs with linear index of at most $k$ coincide with the graphs $G$ for which there exists a homomorphism from $\bar{G}$ to $\bar{G_k}$. In this paper, we give a direct proof to this result that works for digraphs as well. 

We show how to use this classification result to generate lower bounds on scalar and vector index. In particular, we provide a lower bound for the scalar index of a digraph in terms of the chromatic number of its complement. 

Using our framework, we show that by changing the field size, linear index of a digraph can be at most increased by a factor that is independent from the number of the nodes. 
%
\end{abstract}

\section{Introduction}\label{sec:Introduction}
The \emph{index coding} problem, first introduced by Birk and Kol in the context of satellite communication \cite{BirkKol-IndexCoding-INFOCOM98}, has received significant attention during past years (see for example \cite{AlonLubetStavWeinHassid-Focs08, LubetStav-IT09, RouaySprintGeorgh-IT10, BlaKleLub-CoRR10, YosBirkJayKol-IT11, BerLangberg-ISIT11, HavivLangberg-ISIT12, TehraniDimakisNeely-ISIT12, MalCadJafar-CoRR12, ShanmugamDimakisLangberg-CoRR13, ArbabBanderKimSasogluWang-ISIT13}). This problem has many applications such as satellite communication, multimedia distribution over wireless networks, and distributed caching. Despite its simple description, the index coding problem has a rich structure and it has intriguing connections to some of the information theory problems. It has been recently shown that the feasibility of any network coding problem can be reduced to an equivalent feasibility problem in the index coding problem (and vice versa) \cite{EffrosSalimLangberg-CoRR12}. Also an interesting connection between index coding problem and interference alignment technique has been appeared in \cite{MalCadJafar-CoRR12}.

In this work, we focus on the index coding problems that can be represented by a side information graph (defined in \S\ref{sec:ProbStatement}), \ie, user demands are distinct and there is exactly one receiver for each message. For this case we consider the framework for studying the index coding problem that uses ideas from graph homomorphism. More precisely, we show that the minimum broadcast rate of an index coding problem (linear or non-linear) can be upper bounded by the minimum broadcast rate of another index coding problem if there exists a homomorphism from the (directed) complement of the side information graph of the first problem to that of the second problem. Consequently, we show that the chromatic and fractional chromatic number upper bound are special cases of our results (\eg, see \cite{YosBirkJayKol-IT11,BlaKleLub-CoRR10}).

For the case of linear scalar, we also prove the opposite direction, namely, we show that for every positive integer $k$ and prime power $q$, there exits a digraph $H^{q}_k$ such that the $q$-arry linear index of $H^q_k$ is at most $k$ and the complement of any digraph whose $q$-arry linear index is also at most $k$ is homomorphic to $\compl{H^{q}_k}$. The set of graphs $H^{q}_k$ are analogous to the ``graph family $G_k$'' defined in \cite{ChlamHaviv-CoRR11} for studying a parameter of the graph called \emph{minrank}. In contrast to those graphs, $H^{q}_k$ are defined for arbitrary finite fields as opposed to the binary field and more importantly, they can be utilised to study the linear index code even if the graphs of interest are directed. Moreover, our proof does not use the result of \cite{YosBirkJayKol-IT11} about the equivalence between the minrank and the linear index of graphs.

Using the reduction of the scalar index coding problem to the homomorphism problem and the notion of increasing functions on the set of digraphs, we provide a family of lower bounds on the binary index of digraphs. As a particular example of such lower bounds, we extend the the previously known bound $\log_q(\chi(\compl{G})) \le \lind_q(G)$ \cite{LangbergSprinston-ISIT08} from $q=2$ to arbitrary $q$.

As an application of our work, we show a connection between $\lind_p(\cdot)$ and $\lind_q(\cdot)$ when $p$ and $q$ are different prime powers.

The remainder of this paper is organised as follows. In \S\ref{sec:ProbStatement} we introduce notation, some preliminary concepts about graph homomorphism and give the problem statement. The main results of the paper and their proofs are presented in \S\ref{sec:IndCode_via_GrphHom} and \S\ref{sec:EquivFormLinScalBinarIndexCoding}. In \S\ref{sec:Application}, some applications of our main results are stated. 
The omitted proofs can be found in \cite{EbrahimiJafari-TechReport13}.

\section{Notation and Problem Statement}\label{sec:ProbStatement}

\subsection{Notation and Preliminaries}
For convenience, we use $[m:n]$ to denote for the set of natural numbers $\{m,\ldots,n\}$. For any set $\set{A}$, we use $\mc{P}^\star(\set{A})$ to denote for all of the non-empty subsets of $A$. Let $x_1,\ldots,x_n$ be a set of variables. Then for any subset $\set{A}\subseteq [1:n]$ we define $x_{\set{A}}\triangleq (x_i:i\in\set{A})$.

A directed graph (digraph) $G$ is represented by $G(V,E)$ where $V$ is the set of vertices and $E\subseteq (V\times V)$ is the set of edges. For $v\in V(G)$ we denote by $N^+_G(v)$ as the set of outgoing neighbours of $v$, \ie, $N_G^+(v)=\{u\in V:(v,u)\in E(G)\}$. For a digraph $G$ we use $\compl{G}$ to denote for its \emph{directional complement}, \ie, $(u,v)\in E(G)$ iff $(u,v)\notin E(\compl{G})$.

\begin{definition}[Homomorphism, see~\cite{HellNese-Book04-GraphHomomorphism}]
Let $G$ and $H$ be any two digraphs. A \emph{homomorphism} from $G$ to $H$, written
as $\phi:G\mapsto H$ is a mapping $\phi:V(G)\mapsto V(H)$ such that $(\phi(u),\phi(v))\in E(H)$ whenever $(u,v)\in E(G)$.
If there exists a homomorphism of $G$ to $H$ we write $G\rightarrow H$,
and if there is no such homomorphism we shall write $G\nrightarrow H$. In the former case we say that $G$ is homomorphic to $H$.
\end{definition}

\begin{definition}
On the set of all loop-less digraphs $\set{G}$, we define the partial pre
order ``$\preccurlyeq$''  as follows. For every pair of $G,H\in\set{G}$,
$G\preccurlyeq H$ if and only if there exists a homomorphism $\phi:\compl{G}\mapsto \compl{H}$. It is straightforward to see that ``$\preccurlyeq$'' is reflexive and transitive. Moreover, if $G\preccurlyeq H$
and $H\preccurlyeq G$, then the digraphs $\compl{G}$ and $\compl{H}$ are
homomorphically equivalent (\ie, $\compl{G}\rightarrow\compl{H}$ and
$\compl{H}\rightarrow\compl{G}$). In this case we write $G\sim H$.
\end{definition}
Notice that homomorphically equivalence does not imply isomorphism between graphs (digraphs). For example, all the bipartite graphs are homomorphically equivalent to $K_2$ and therefore are homomorphically equivalent to each other but they are not necessarily isomorphic. 
\begin{definition}
Let $\set{D}\subseteq\set{G}$ be an arbitrary set of digraphs. A mapping $h:\set{D}\mapsto\mbb{R}$ is called \emph{increasing} over $\set{D}$ if for every 
$G,H\in\set{D}$ such that $G\preccurlyeq H$ then $h(G)\le h(H)$.
\end{definition}

\subsection{Problem Statement}
Consider the communication problem where a transmitter aims to communicate
a set of $m$ messages $x_1,\ldots,x_m\in\set{X}$ to $m$ receivers by broadcasting $\ell$ 
symbols $y_1,\ldots,y_{\ell}\in\set{Y}$, over a public noiseless channel. We assume that for each $j\in[1:m]$,
the $j$th receiver has access to the side information $x_{\set{A}_j}$, \ie, a subset
$\set{A}_j \subseteq [1:m]\setminus \{j\}$ of messages. Each receiver $j$ intends to recover $x_j$ from $(y^\ell,x_{\set{A}_j})$.

This problem, which is a basic setting of the \emph{index coding} problem, 
can be represented by a \emph{directed side information graph} $G(V,E)$ where $V$ represents the 
set of receivers/messages and there is an edge from node $v_i$ to $v_j$, \ie, 
$(v_i,v_j)\in E$, if the $i$th receiver has packet $x_j$ as side information.
An index coding problem, as defined above, is completely characterized by 
the side information sets $\set{A}_j$. 
%
%

In the following definitions, we formally define validity of an index codes and some other basic concepts in index coding (see also \cite{AlonLubetStavWeinHassid-Focs08,BlaKleLub-CoRR10}, and \cite{ShanmugamDimakisLangberg-CoRR13}).

\begin{definition}[Valid Index Code]
A \emph{valid index code} for $G$ over an alphabet $\set{X}$ is a set 
$(\Phi,\{\Psi_i\}_{i=1}^m)$
consisting of: (i) an encoding function $\Phi:\set{X}^m\mapsto \set{Y}^\ell$
which maps $m$ source messages to a
transmitted sequence of length $\ell$ of symbols from $\set{Y}$;
(ii) a set of $m$ decoding functions $\Psi_i$ such that for each 
$i\in[1:m]$ we have $\Psi_i(\Phi(x_1,\ldots,x_m),x_{\set{A}_i})=x_i$.
\end{definition}

\begin{definition}
Let $G$ be a digraph, and $\set{X}$ and $\set{Y}$ are the source and the message alphabet, respectively.
\\
(i) The ``\emph{broadcast rate}'' of an index code $(\Phi,\{\Psi_i\})$ is defined as $\ind_{\set{X}}(G,\Phi,\{\Psi_i\})\triangleq \frac{\ell\log|\set{Y}|}{\log|\set{X}|}$.\\
(ii) The ``\emph{index}'' of $G$ over $\set{X}$, denoted by $\ind_{\set{X}}(G)$ is defined as
$\ind_{\set{X}}(G)= \inf_{\Phi,\{\Psi_i\} } \ind_{\set{X}}(G,\Phi,\{\Psi_i\})$.\\
(iii) If $\set{X}=\set{Y}=\mbb{F}_q$ (the $q$-element finite field for some prime power $q$), the ``\emph{scalar linear index}'' of $G$, denoted by $\lind_{q}(G)$ is defined as $\lind_{q}(G) \triangleq \inf_{\Phi,\{\Psi_i\}} \ind_{\mbb{F}_q} (G,\Phi,\{\Psi_i\})$ in which the infimum is taken over the coding functions of the form $\Phi=(\Phi_1,\ldots,\Phi_\ell)$ and each $\Phi_i$ is a linear combination of $x_j$'s with coefficients from $\mbb{F}_q$.\\
(iv) If $\set{X}=\mbb{F}_q^t$ and $\set{Y}=\mbb{F}_q$, the \emph{vector linear index} for $G$, denoted by $\vlind_{q^t}(G)$ is defined as $\vlind_{q^t}(G) \triangleq \inf_{\Phi,\{\Psi_i\}} \ind_{\mbb{F}_q^t}(G,\Phi,\{\Psi_i\})$ where the infimum is taken over all coding functions $\Phi=(\Phi_1,\ldots,\Phi_\ell)$ such that $\Phi_i:\mbb{F}_q^{tm}\mapsto\mbb{F}_q$ are $\mbb{F}_q$-linear functions.\\
(v) The ``\emph{minimum broadcast rate}'' of the index coding problem of $G$ is defined as
$\ind(G) \triangleq \inf_{\set{X}}\ \inf_{\Phi,\{\Psi_i\} } \ind_{\set{X}}(G,\Phi,\{\Psi_i\})$.
\end{definition}

\section{Index Coding via Graph Homomorphism}\label{sec:IndCode_via_GrphHom}
In this section, we will explain a method for designing index codes from another instance of index coding problem when there exists a homomorphism from the complement of the side information graph of the first problem to that of the second one. As an application to this result, we will show in \S\ref{sec:Application} that some of the previously known results about index code design are special types of our general method. 

\begin{theorem}\label{thm:MainResult_indRate_Homomorphism}
Consider two instances of the index coding problems over the digraphs $G$ and
$H$ with the source alphabet $\set{X}$. 
If $G\preccurlyeq H$ then
\begin{equation*}
\ind_{\set{X}}(G) \le \ind_{\set{X}}(H).
\end{equation*}
In other words, the function $\ind_{\set{X}}(\cdot)$ is a 
non-decreasing function on the pre order set $(\set{G},\preccurlyeq)$.
\end{theorem}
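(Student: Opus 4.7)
The plan is to convert any valid index code $(\Phi_H,\{\Psi^H_j\}_{j=1}^n)$ for $H$ with output alphabet $\set{Y}$ and $\ell$ broadcast symbols into a valid index code for $G$ using the same $\set{Y}$ and the same $\ell$; this matches the broadcast rates exactly, and passing to the infimum over $H$-codes then gives $\ind_{\set{X}}(G)\le\ind_{\set{X}}(H)$. Fix a homomorphism $\phi:\compl{G}\to\compl{H}$ witnessing $G\preccurlyeq H$ and equip $\set{X}$ with any abelian-group structure (for example the cyclic group of order $|\set{X}|$), so that $G$-messages can be summed.

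Given $G$-messages $x_1,\ldots,x_m$, the $G$-encoder ``pushes them along $\phi$'' by defining, for each $w\in V(H)$,
\[
\tilde{x}_w \;=\; \sum_{u\in\phi^{-1}(w)} x_u,
\]
with the empty sum interpreted as the identity, and then broadcasting $\Phi_H(\tilde{x}_1,\ldots,\tilde{x}_n)$. Receiver $u$ in $G$ simulates receiver $\phi(u)$ in $H$: it first reconstructs $\tilde{x}_w$ for each $w\in N^+_H(\phi(u))$ from its own side information $x_{N^+_G(u)}$, then applies $\Psi^H_{\phi(u)}$ to obtain $\tilde{x}_{\phi(u)}$, and finally subtracts the remaining summands $x_v$ with $v\in\phi^{-1}(\phi(u))\setminus\{u\}$ to recover $x_u$.

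Correctness hinges on two consequences of the contrapositive of the homomorphism condition, using the looplessness of $\compl{G}$ and $\compl{H}$. First, whenever $v\in\phi^{-1}(\phi(u))\setminus\{u\}$, the pair $(\phi(u),\phi(v))$ would be a loop in $\compl{H}$ and is therefore absent, which forces $(u,v),(v,u)\notin E(\compl{G})$ and hence $(u,v),(v,u)\in E(G)$; this legalises the final subtraction. Second, for any $w\in N^+_H(\phi(u))$ and any $v\in\phi^{-1}(w)$, looplessness of $H$ gives $\phi(v)=w\ne\phi(u)$, while $(\phi(u),\phi(v))\in E(H)$ rules out $(u,v)\in E(\compl{G})$, so $(u,v)\in E(G)$ and $u$ knows every summand of $\tilde{x}_w$.

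The main obstacle is choosing a combining rule for the $\tilde{x}_w$'s that is simultaneously computable from $u$'s side information and invertible by $u$ to extract $x_u$; the abelian-group sum above works precisely because the two observations guarantee that $u$ is $G$-adjacent to every other vertex lying in the fibers of $\phi$ that it must touch. With these two adjacency facts established, validity of the constructed $(\Phi_G,\{\Psi^G_i\})$ is immediate, and since it reuses the same $\ell$ and $\set{Y}$, the rate is preserved and the theorem follows.
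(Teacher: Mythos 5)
Your proposal is correct and follows essentially the same route as the paper: partition $V(G)$ into the fibers $\phi^{-1}(w)$, aggregate each fiber into a single $\set{X}$-symbol via a coordinate-wise one-to-one map, run the $H$-code on the aggregates, and decode using the adjacency facts of Lemma~\ref{lem:Homomorphism_bundle}. Your abelian-group sum is a concrete instance of the coordinate-wise one-to-one function whose existence the paper invokes in Lemma~\ref{lem:ExistenceCoordinate-wise_one-to-one}, so the two arguments coincide.
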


First we explain the proof idea of 
Theorem~\ref{thm:MainResult_indRate_Homomorphism} which is as follows. If $G\preccurlyeq H$, by definition
there exists a homomorphism $\phi:\compl{G}\mapsto\compl{H}$. Notice that
the function $\phi$ maps the vertices of $\compl{G}$ to the vertices of 
$\compl{H}$. Thus we can also consider, $\phi$ as a function from $V(G)$ to $V(H)$. For every vertex $w\in V(H)$, we denote by $\phi^{-1}(w)$ to be
the set of all the vertices $v\in V(G)$ such that $\phi(v)=w$; (see Figure~\ref{fig:Homomorphism_and_InvImage}).
This way, we partition the vertices of $G$ into the classes of the form $\phi^{-1}(w)$ where $w\in V(H)$.

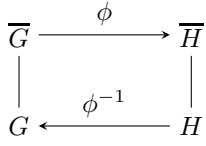
\begin{figure}
\begin{center}
\begin{tikzpicture}
  \matrix (m) [matrix of math nodes,row sep=2em,column sep=5em,minimum width=1em]
  {
     \compl{G} & \compl{H} \\
     G & H \\};
     
  \path[-stealth]
    (m-1-1) edge [-] (m-2-1)
                  edge node [above] {$\phi$} (m-1-2)
    (m-1-2) edge [-] (m-2-2)
    (m-2-2) edge node [above] {$\phi^{-1}$} (m-2-1);     
\end{tikzpicture}
\end{center}
\vspace{-3mm}
\caption{Homomorphism $\phi$ maps the vertices of $G$ to the vertices of $H$. The pre-image $\phi^{-1}$ of the homomorphism can be considered as a mapping from $V(H)$ to $V(G)$, \ie, for every $w\in V(H)$, $\phi^{-1}(w)$ is the set of all the vertices $v$ in $V(G)$ such that $\phi(v)=w$.}\label{fig:Homomorphism_and_InvImage}
\end{figure}

Next, we take an optimal index code for $H$ over the source alphabet $\set{X}$ that achieves the rate $\ind_{\set{X}}(H)$. Then, we show that we can treat every part $\phi^{-1}(w)$ as a single node and translate the index code of $H$ to one for $G$. This shows the statement of the theorem, \ie, $\ind_{\set{X}}(G) \le \ind_{\set{X}}(H)$.

Before we formally define the translation and verify its validity, we will state two technical lemmas that will be required later in the proof of Theorem~\ref{thm:MainResult_indRate_Homomorphism}.

\begin{lemma}\label{lem:Homomorphism_bundle}
(i) For every $w\in V(H)$, $\phi^{-1}(w)$ is a clique in $G$.\\
(ii) If $w_2\in N_H^+(w_1)$, $v_1\in\phi^{-1}(w_1)$, and $v_2\in\phi^{-1}(w_2)$
then $v_2\in N_G^+(v_1)$. (Also see Figure~\ref{fig:Homomorphism_bundle}).
\end{lemma}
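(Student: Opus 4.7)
My plan is to derive both parts from a single unified contrapositive of the homomorphism property. Since $\phi:\compl{G}\mapsto\compl{H}$ is a homomorphism, the defining property $(u,v)\in E(\compl{G})\Rightarrow(\phi(u),\phi(v))\in E(\compl{H})$ contrapositively says: whenever $(\phi(u),\phi(v))\notin E(\compl{H})$, we must have $(u,v)\notin E(\compl{G})$. By the definition of the directional complement (applied on the class of loop-less digraphs, where neither $G$ nor $\compl{G}$ has self-loops), this last statement is equivalent to $(u,v)\in E(G)$ for distinct vertices $u,v$. So the proof boils down to verifying, in each case, that the image pair $(\phi(v_1),\phi(v_2))$ fails to be an edge of $\compl{H}$.

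For part~(i), I would pick any two distinct $v_1,v_2\in\phi^{-1}(w)$, so $\phi(v_1)=\phi(v_2)=w$. Because $\compl{H}$ is loop-less, $(w,w)\notin E(\compl{H})$. Applying the contrapositive twice, to the ordered pairs $(v_1,v_2)$ and $(v_2,v_1)$, I obtain both $(v_1,v_2)\in E(G)$ and $(v_2,v_1)\in E(G)$. Since this holds for every choice of two distinct vertices in $\phi^{-1}(w)$, the set $\phi^{-1}(w)$ induces a (bidirectional) clique in $G$.

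For part~(ii), the assumption $w_2\in N_H^+(w_1)$ means $(w_1,w_2)\in E(H)$, which gives $(w_1,w_2)\notin E(\compl{H})$. Now for any $v_1\in\phi^{-1}(w_1)$ and $v_2\in\phi^{-1}(w_2)$, the pair $(\phi(v_1),\phi(v_2))=(w_1,w_2)$ is not in $E(\compl{H})$, so the same contrapositive yields $(v_1,v_2)\in E(G)$, \ie\ $v_2\in N_G^+(v_1)$, as required.

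I do not foresee a real obstacle; the argument is purely a matter of correctly unwinding the definitions. The only point that requires a touch of care is ensuring the loop-less convention is consistently applied in part~(i)--- one must know that $\compl{H}$ has no self-loop at $w$ in order to invoke the contrapositive with $\phi(v_1)=\phi(v_2)=w$; restricting the partial pre-order $\preccurlyeq$ to loop-less digraphs (as the paper does) guarantees exactly this.
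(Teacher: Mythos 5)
Your proof is correct and is the natural argument: unwind the homomorphism condition $\phi:\compl{G}\to\compl{H}$ in contrapositive form, noting that for distinct vertices non-membership in $E(\compl{G})$ is equivalent to membership in $E(G)$, and handle part~(i) by using loop-lessness of $\compl{H}$ at the common image vertex $w$. The paper omits this proof (deferring to the technical report), but given that the lemma is a direct consequence of the definitions, your argument is exactly the intended one; the only genuinely delicate point --- that $(v_1,v_2)\in E(\compl{G})\Leftrightarrow (v_1,v_2)\notin E(G)$ must be read for distinct $v_1,v_2$ under the loop-less convention --- you flag and resolve correctly.
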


\begin{figure}
\begin{center}
\begin{tikzpicture}[->, >=stealth', shorten >=0pt, auto, thick, outline/.style={draw=black, thick},fnode/.style={circle, inner sep=0pt, minimum size=4pt, draw=black, fill}]
    \draw[outline] (0,0) circle (0.85cm)
      node[fnode] (u1) at +({360/3 * (3 - 1)}:15pt) {}
      node[fnode] (u2) at +({360/3 * (3 - 2)}:15pt) {}
      node[fnode] (u3) at +({360/3 * (3 - 3)}:15pt) {};        
    \path[every node/.style={font=\sffamily\small}]
      (u1) edge node {} (u2)
           edge node {} (u3)
      (u2) edge node {} (u1)
           edge node {} (u3)
      (u3) edge node {} (u1)
           edge node {} (u2);

    \draw[outline] (0.75,2) circle (0.85cm)
      node[fnode] (v1) at +({360/3 * (3 - 1)}:15pt) {}
      node[fnode] (v2) at +({360/3 * (3 - 2)}:15pt) {}
      node[fnode] (v3) at +({360/3 * (3 - 3)}:15pt) {};        
    \path[every node/.style={font=\sffamily\small}]
      (v1) edge node {} (v2)
           edge node {} (v3)
      (v2) edge node {} (v1)
           edge node {} (v3)
      (v3) edge node {} (v1)
           edge node {} (v2);

    \path[every node/.style={font=\sffamily\small}]
      (u3) edge node {} (v1)
      (u2) edge node {} (v1)
      (u2) edge node {} (v2);      

    \draw (4,0) node[fnode] (w1) {} [below right] node {$w_1$};
    \draw (5,2) node[fnode] (w2) {} [above left] node {$w_2$};
    \path[every node/.style={font=\sffamily\small}]
      (w1) edge node {} (w2);
      
    \node[scale=2] at (2.7,0.8) {${\Longleftarrow}$};
    \node at (2.8,1.25) {$\phi^{-1}$};
    
    \node at (1.5,-0.55) {$\phi^{-1}(w_1)$};
    \node at (2.3,2.4) {$\phi^{-1}(w_2)$};    
\end{tikzpicture}
\end{center}
\caption{Demonstration of Lemma~\ref{lem:Homomorphism_bundle}. Part (i) states that inside each bundle $\phi^{-1}(w_i)$ we have a clique and part (ii) states that if $w_2$ is an outgoing neighbour of $w_1$ in $H$ then all of the vertices  in $\phi^{-1}(w_1)$ of $G$ are connected to all of the vertices in $\phi^{-1}(w_2)$ of $G$ (note that all of the edges from $\phi^{-1}(w_1)$ to $\phi^{-1}(w_2)$ are not shown in the figure).}\label{fig:Homomorphism_bundle}
\end{figure}
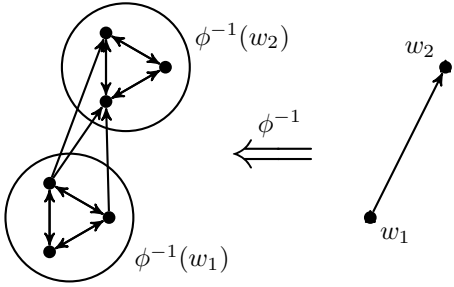

\begin{definition}
For every finite set $\set{X}$ and positive integer $m$, a function
$f:\set{X}^m\mapsto X$ is called \emph{coordinate-wise one-to-one} 
if by setting the values for every $m-1$ variables of $f$, it is 
a one-to-one function of the remaining variable, \ie, for every $j\in[1:m]$ and any choice of $a_1,\ldots,a_{j-1},a_{j+1},\ldots,a_m\in\set{X}$, the function
$f(a_1,\ldots,a_{j-1},x,a_{j+1},\ldots,a_m):\set{X}\mapsto\set{X}$ is
one-to-one.
\end{definition}

\begin{lemma}\label{lem:ExistenceCoordinate-wise_one-to-one}
For every finite set $\set{X}$ and $m\in\mbb{N}$, there exists a coordinate-wise one-to-one function.
\end{lemma}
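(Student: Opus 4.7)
\medskip

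\noindent\textbf{Proof plan.} The plan is to exhibit such a function explicitly by endowing $\set{X}$ with an abelian group structure. Since $\set{X}$ is finite, we may fix any bijection $\set{X}\leftrightarrow\mbb{Z}/n\mbb{Z}$ where $n=|\set{X}|$, and use this identification to transport the additive group structure of $\mbb{Z}/n\mbb{Z}$ onto $\set{X}$. Call the resulting binary operation $+$.

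\medskip

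\noindent The candidate function is then
\begin{equation*}
f(x_1,\ldots,x_m) \triangleq x_1 + x_2 + \cdots + x_m,
\end{equation*}
with the sum computed in the chosen group. To verify the coordinate-wise one-to-one property, I would fix an index $j\in[1:m]$ and arbitrary values $a_1,\ldots,a_{j-1},a_{j+1},\ldots,a_m\in\set{X}$, and set $c \triangleq \sum_{i\ne j} a_i$. Then the restricted map
\begin{equation*}
x \mapsto f(a_1,\ldots,a_{j-1},x,a_{j+1},\ldots,a_m) = x + c
\end{equation*}
is simply translation by the constant $c$, which is a bijection of $\set{X}$ onto itself because the additive group is a group (every element has an additive inverse). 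Since $j$ and the $a_i$'s were arbitrary, $f$ is coordinate-wise one-to-one.

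\medskip

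\noindent There is essentially no obstacle here: the only ``choice'' is the group structure, and any finite abelian group on $\set{X}$ works. If one prefers to avoid invoking an arbitrary identification, the same argument goes through when $|\set{X}|=q$ is a prime power using addition in $\mbb{F}_q$ (which is the setting most relevant to the rest of the paper, where $\set{X}=\mbb{F}_q$ or $\set{X}=\mbb{F}_q^t$); for general $|\set{X}|$, using $\mbb{Z}/n\mbb{Z}$ is the cleanest way to state the construction.
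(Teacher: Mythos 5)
Your proof is correct, and it uses the natural construction: transport a group structure (e.g., $\mbb{Z}/n\mbb{Z}$) onto $\set{X}$ and take $f$ to be the group sum, so that freezing all but one coordinate gives a translation, hence a bijection. The paper itself defers this proof to its technical report, but your argument is the standard and essentially canonical one for this lemma, and it is complete as written.
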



\begin{proof}[Proof of Theorem~\ref{thm:MainResult_indRate_Homomorphism}]
Suppose that $V(H)=\{w_1,\ldots,w_n\}$ where $n=|V(H)|$. Let $\phi:\compl{G}\mapsto\compl{H}$
be a homomorphism. As stated in Lemma~\ref{lem:Homomorphism_bundle}, the
vertex set of $G$ can be partitioned into $n$ cliques of the form 
$\phi^{-1}(w_i)$. So, we can list the vertices of $G$ as 
$V(G)=\{v_{1,1},\ldots,v_{1,k_1},\ldots,v_{n,1},\ldots,v_{n,k_n}\}$ such that
$\phi^{-1}(w_i)=\{v_{i,1},\ldots,v_{i,k_i}\}$ and $k_i=|\phi^{-1}(w_i)|$.
Note that $m=|V(G)|=\sum_{i=1}^n k_i$.

Let $\ell=\ind_{\set{X}}(H)$ and $\Phi^H(x_1,\ldots,x_n):\set{X}^n\mapsto \set{Y}^\ell$ (in addition to a set of decoders $\{\Psi^H_i\}$) be an optimal valid index code for $H$ over the source alphabet $\set{X}$ (and the message alphabet $\set{Y}$) where $x_i$ is the variable associated to the node $w_i$. 

Validity of the index code implies that for every node $w_i\in H$, there exists
a decoding function $\Psi^H_i:\set{Y}^\ell\times \set{X}^{|N_H^+(w_i)|} \mapsto \set{X}$ such that 
$\Psi^H_i(\Phi^H(x_1,\ldots,x_n),x_{N_H^+(w_i)})=x_i$
for every choice of $(x_1,\ldots,x_n)\in\set{X}^n$.

Finally, we construct a valid index code for $G$ over the same alphabet sets and
the same transmission length $\ell$; thus it results in an index code for $G$
with the same broadcast rate. For an explicit construction see \cite{EbrahimiJafari-TechReport13}.
\end{proof}

As a result of Theorem~\ref{thm:MainResult_indRate_Homomorphism} we have the following corollary.
\begin{corollary}\label{cor:MainResult_indRate_Homomorphism}
Consider two instances of the index coding problems over the digraphs $G$ and
$H$. If $G\preccurlyeq H$ then we have
\begin{enumerate}
\item for general multi-letter index codes: $\ind(G) \le \ind(H)$,
\item for linear vector index codes: $\vlind_{q^t}(G) \le \vlind_{q^t}(H)$,
\item for linear scalar index codes: $\lind_{q}(G) \le \lind_{q}(H)$.
\end{enumerate}
\end{corollary}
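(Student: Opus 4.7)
The plan is to derive all three parts from Theorem~\ref{thm:MainResult_indRate_Homomorphism}, either by taking an infimum or by checking that the construction used in that theorem preserves linearity.

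For part (1), by definition $\ind(G) = \inf_{\set{X}} \ind_{\set{X}}(G)$ and similarly for $H$. Theorem~\ref{thm:MainResult_indRate_Homomorphism} applies uniformly to every source alphabet $\set{X}$, so $\ind_{\set{X}}(G) \le \ind_{\set{X}}(H)$ holds for all $\set{X}$. Taking the infimum over $\set{X}$ on both sides gives $\ind(G) \le \ind(H)$ immediately. No further work is needed here.

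For parts (2) and (3), the scalar case is the specialisation $t=1$ of the vector case, so I would focus on part (2). The idea is to revisit the explicit construction that proves Theorem~\ref{thm:MainResult_indRate_Homomorphism} and insist that the coordinate-wise one-to-one function used to merge the variables inside each clique $\phi^{-1}(w_i)$ is $\mbb{F}_q$-linear. Over $\set{X}=\mbb{F}_q^t$, the simple choice $f(y_1,\ldots,y_m) = y_1 + y_2 + \cdots + y_m$ is coordinate-wise one-to-one (fixing all but one variable leaves an affine bijection of $\mbb{F}_q^t$), so we can define the bundled variable $\tilde{x}_i \triangleq \sum_{v \in \phi^{-1}(w_i)} x_v \in \mbb{F}_q^t$. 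Feeding these bundled variables into an optimal $\mbb{F}_q$-linear index code $\Phi^H$ for $H$ yields $\Phi^H(\tilde{x}_1,\ldots,\tilde{x}_n)$, which is a composition of $\mbb{F}_q$-linear maps and therefore itself $\mbb{F}_q$-linear in $(x_v)_{v\in V(G)}$. The transmission length $\ell$ is unchanged, so the broadcast rate is the same.

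The remaining step is to check that each receiver in $G$ can perform $\mbb{F}_q$-linear decoding using only its side information. Receiver $v_{i,j}$ wants $x_{i,j}$. By Lemma~\ref{lem:Homomorphism_bundle}(ii), $v_{i,j}$ has side information on every vertex in $\phi^{-1}(w)$ for every $w \in N_H^+(w_i)$, so it can linearly compute each $\tilde{x}_w$ in this bundle; plugging these into $\Psi^H_i$ (which is $\mbb{F}_q$-linear since the code for $H$ is linear) recovers $\tilde{x}_i$ via an $\mbb{F}_q$-linear operation. Finally, by Lemma~\ref{lem:Homomorphism_bundle}(i), $\phi^{-1}(w_i)$ is a clique in $G$, so $v_{i,j}$ also has $x_{i,j'}$ for all $j' \ne j$, and can subtract these from $\tilde{x}_i$ to obtain $x_{i,j}$. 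All three stages are $\mbb{F}_q$-linear, so the induced code for $G$ is a valid linear vector index code at the same rate, yielding $\vlind_{q^t}(G) \le \vlind_{q^t}(H)$. Setting $t=1$ gives part (3).

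The main obstacle I anticipate is purely a bookkeeping one: verifying end-to-end linearity hinges on choosing the coordinate-wise one-to-one function to be $\mbb{F}_q$-linear (the sum works, but an arbitrary choice as in Lemma~\ref{lem:ExistenceCoordinate-wise_one-to-one} would not), and on invoking both parts of Lemma~\ref{lem:Homomorphism_bundle} at the right moments of the decoding to ensure the receiver has access to exactly the quantities it needs to carry out the linear operations.
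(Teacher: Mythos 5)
Your proposal is correct and follows essentially the same route the paper intends: derive part~(1) by taking an infimum over source alphabets, and derive parts~(2) and~(3) by re-examining the construction that proves Theorem~\ref{thm:MainResult_indRate_Homomorphism} and checking it preserves $\mbb{F}_q$-linearity. The key move --- specialising the coordinate-wise one-to-one ``bundling'' function to the coordinate-wise sum, which is $\mbb{F}_q$-linear, and then tracking linearity through $\Phi^H$ --- is exactly what is needed, and your use of both halves of Lemma~\ref{lem:Homomorphism_bundle} to justify the decoding (outgoing bundles are fully known as side information; the home bundle is a clique, so all but the target variable can be subtracted off) is the right bookkeeping. One small remark: your parenthetical assertion that the decoders $\Psi^H_i$ are themselves $\mbb{F}_q$-linear is true but not actually required --- the paper's definitions of $\lind_q$ and $\vlind_{q^t}$ only constrain the \emph{encoding} maps to be linear, and linearity of the constructed encoder for $G$ already follows from linearity of $\Phi^H$ and of the bundling sum, irrespective of what form $\Psi^H_i$ takes. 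Dropping that claim would make the argument slightly cleaner, though keeping it is harmless.
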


\section{An Equivalent Formulation for Linear Scalar Index Coding Problem}\label{sec:EquivFormLinScalBinarIndexCoding}
Let $\set{G}^{q}_k$ be the set of all the finite digraphs $G$ for which  $\lind_q(G)\le k$. It is obvious to see that $\set{G}^{q}_k$ is an infinite family of digraphs. However, in this section, we will show that $\set{G}^{q}_k$ has a maximal member with respect to the pre order ``$\preccurlyeq$''. We give an explicit construction for a maximal element of $\set{G}^{q}_k$ which we call it $H^q_k$.

In fact, we show that $\forall G\in\set{G}^q_k$, $G\preccurlyeq H^q_k$. On the other hand, by Corollary~\ref{cor:MainResult_indRate_Homomorphism}, Part~	3, we know that if $G\preccurlyeq H^q_k$ then $\lind_q(G) \le \lind_q(H^q_k) \le k$. Thus, we can conclude the following theorem.

\begin{theorem}\label{thm:IndxCdPrblm_equiv_GrphHomomorphism}
For every positive integer $k$ and a prime power $q$, there exists a graph $H^q_k$ with $\frac{q^k-1}{q-1} q^{k-1}$ nodes such that for every graph $G$, $\lind_q(G)\le k$ if and only if $G\preccurlyeq H^q_k$ or equivalently, there exists a homomorphism from $\compl{G}$ to $\compl{H^q_k}$.
\end{theorem}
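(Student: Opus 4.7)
The plan is to give an explicit construction of $H^q_k$ using linear-algebraic data over $\mbb{F}_q^k$ and then to verify the equivalence in both directions. I would take the vertex set
\[
V(H^q_k)\triangleq\bigl\{(L,W): L \text{ is a } 1\text{-dim.\ subspace of } \mbb{F}_q^k,\ W \text{ is a hyperplane of } \mbb{F}_q^k,\ L\not\subseteq W\bigr\},
\]
and, for distinct vertices, declare $((L,W),(L',W'))\in E(\compl{H^q_k})$ precisely when $L'\subseteq W$. Since there are $\tfrac{q^k-1}{q-1}$ lines and, for each fixed line, exactly $q^{k-1}$ hyperplanes missing it, $|V(H^q_k)|$ matches the claimed count.

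The first step is to exhibit an explicit length-$k$ scalar linear code for $H^q_k$. I would pick, for each $u=(L,W)$, an arbitrary generator $a_u\in L\setminus\{0\}$ and use the matrix with columns $\{a_u\}_{u\in V(H^q_k)}$ as the encoder. Validity is immediate from the construction: for $u=(L,W)$, any out-neighbor $u'=(L',W')$ of $u$ in $\compl{H^q_k}$ satisfies $L'\subseteq W$, so $a_{u'}\in W$; hence $\mr{span}\{a_{u'}:u'\in N^+_{\compl{H^q_k}}(u)\}\subseteq W$, whereas $a_u\notin W$ because $L\not\subseteq W$. Thus $\lind_q(H^q_k)\le k$, and one direction of the theorem falls out from Corollary~\ref{cor:MainResult_indRate_Homomorphism}(3): $G\preccurlyeq H^q_k$ gives $\lind_q(G)\le\lind_q(H^q_k)\le k$.

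The substantive direction is the converse. Given a valid length-$k$ scalar linear code for $G$ with column vectors $\{a_v\}_{v\in V(G)}\subseteq\mbb{F}_q^k$, the validity condition at $v$ reads $a_v\notin\mr{span}\{a_w:w\in N^+_{\compl G}(v)\}$. I would use this to pick, for every $v$, a hyperplane $W_v$ that contains $\mr{span}\{a_w:w\in N^+_{\compl G}(v)\}$ but misses $a_v$, and then set $\phi(v)\triangleq(\langle a_v\rangle,W_v)$. The target lies in $V(H^q_k)$ because $a_v\notin W_v$. For any edge $(v,w)\in E(\compl G)$, the inclusion $a_w\in W_v$ is precisely $\langle a_w\rangle\subseteq W_v$; moreover $\phi(v)\ne\phi(w)$, since $\langle a_v\rangle=\langle a_w\rangle$ would force $a_v\in\mr{span}\{a_{w'}:w'\in N^+_{\compl G}(v)\}$, contradicting validity. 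So $(\phi(v),\phi(w))\in E(\compl{H^q_k})$ and $\phi$ is a homomorphism.

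The only real design choice is how to index the vertices of $H^q_k$. Labelling by non-zero vectors alone loses the decoding hyperplane and gives the wrong cardinality; labelling by pairs $(L,W)$ lets the single adjacency rule ``$L'\subseteq W$'' simultaneously enforce decodability on the target graph and read off the homomorphism from the given code on $G$. Once that packaging is in place, both implications reduce to short verifications in linear algebra.
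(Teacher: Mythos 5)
Your proposal is correct, and the object you build is a linear--algebra restatement of the paper's $H^q_k$. Under the usual duality between nonempty subsets of $[1:k]$ and nonzero vectors in $\mbb{F}_2^k$ (a subset $J$ corresponds to the line $\langle a_J\rangle$ spanned by its indicator vector, and a subset $I$ to the hyperplane $b_I^\perp$), the paper's vertex condition ``$|I\cap J|$ odd'' is exactly your ``$L\not\subseteq W$,'' and the paper's edge rule ``$A_{(I,J')}=1$'' is exactly your ``$L'\not\subseteq W$.'' The cardinality $\frac{q^k-1}{q-1}q^{k-1}$ comes out the same way.

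What is genuinely different is the verification of the converse direction, and your version is cleaner. The paper (for $q=2$) constructs the homomorphism combinatorially: it partitions $V(G)$ by the support pattern $C_J$ of the encoding vectors, separately partitions by choosing for each receiver a \emph{minimal sufficient family} of transmissions and setting $D_I=\gamma^{-1}(I)$, maps $v_i$ to the pair $(I,J)$, and then needs two auxiliary lemmas to show this is well defined and edge-preserving; the general-$q$ case is deferred to the technical report since the set-theoretic language is essentially binary. You instead start from the direct column-vector criterion for validity of a scalar linear code, $a_v\notin\mr{span}\{a_w:w\in N^+_{\compl G}(v)\}$ (which is an immediate unwinding of the decoding requirement, not the minrank theorem), pick a hyperplane $W_v$ containing that span but avoiding $a_v$, and set $\phi(v)=(\langle a_v\rangle,W_v)$. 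Well-definedness ($L\not\subseteq W$) and edge-preservation ($a_w\in W_v$ for every out-neighbor $w$ in $\compl G$) then reduce to one-line subspace inclusions, and the whole argument applies verbatim for every prime power $q$. You also correctly handle the small point that $\phi$ cannot identify the two endpoints of an edge of $\compl G$, since $\compl{H^q_k}$ is loopless; your argument that $\langle a_v\rangle=\langle a_w\rangle$ would contradict validity is exactly right. The forward direction, exhibiting a length-$k$ code for $H^q_k$ directly from the generator vectors, matches the paper in spirit and is again a short linear-algebra check. In short: same construction, different and more uniform proof packaging, with the bonus that the general-$q$ case needs no separate treatment.
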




For the sake of simplicity, we prove the theorem for $q=2$. For general $q$, a construction for $H^q_k$ as well as a proof of the above theorem is presented in \cite{EbrahimiJafari-TechReport13}.

We start by presenting a construction for $H^2_k$. Consider a $k\times (2^k-1)$ binary matrix $B$ whose rows are labelled by numbers $1,2,\ldots, k$ and whose columns are labelled by non-empty subsets of $[1:k]$. For every $\varnothing\neq J \subseteq [1:k]$, the $J$-th column of $B$ is the indicator vector of the set $J$, \ie, the $(i,J)$-th entry of $B$ is $1$ iff $i\in J$.

Let $A$ be a $(2^k-1)\times (2^k-1)$ binary matrix whose rows and columns are indexed by non-empty subsets of $[1:k]$ and the $I$-th row of $A$ is equal to the binary summation (xor) of the rows of $B$ corresponding to the elements of $I$. Notice that $A_{(I,J)}=1$ iff $|I\cap J|$ is an odd number. Figure~\ref{fig:ExamplMatrix_A_for_K=2} shows an example of $A$ for $k=2$.

\begin{figure}
\begin{equation*}
A = \kbordermatrix{
    & \{1\} & \{2\} & \{1,2\} \\
    \{1\} & 1 & 0 & 1  \\
    \{2\} & 0 & 1 & 1  \\
    \{1,2\} & 1 & 1 & 0 
}
\end{equation*}
\caption{An example of the matrix $A_{3\times 3}$ for $k=2$.}
\label{fig:ExamplMatrix_A_for_K=2}
\vspace{-0.25cm}
\end{figure}

Now, define the digraph $H^2_k$ as follows. The set of vertices of $H^2_k$ is the set of pairs $(I,J)$ where $\varnothing\neq I,J \subseteq [1:k]$ and $A_{(I,J)}=1$. 
We denote the vertex of $H^2_k$ associated with $(I,J)$ by $v_{(I,J)}$. Equivalently, the vertices of $H^2_k$ are $v_{(I,J)}$ where $\varnothing\neq I,J\subseteq [1:k]$ and $|I\cap J|$ is an odd number.
The edges of $H^2_k$ are of the form $(v_{(I,J)},v_{(I',J')})$ such that $A_{(I,J')}=1$.
A simple way to visualize the digraph $H^2_k$ is the following. The vertex set of $H^2_k$ is the $1$'s of the matrix $A$ and there exists an edge from one vertex to another one if and only if the entry that is in the same row as the first vertex and in the same column of the second vertex is also equal to $1$. In particular, all the $1$'s that are in the same row (column) form a clique. An example of $H^2_k$ for $k=2$ is depicted in Figure~\ref{fig:Graph_Hk_for_K=2}. In this example, $(v_{(\{1\},\{1\})} , v_{(\{2\},\{1,2\})}) \in E(H_2)$ because of the entry $(\{1\},\{1,2\})$.

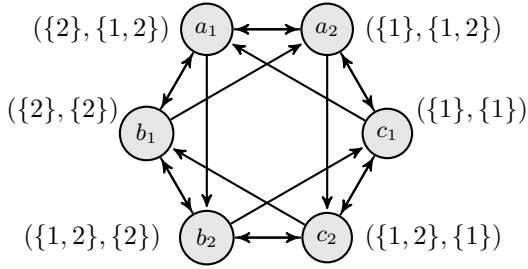
\begin{figure}
\begin{center}
\begin{tikzpicture}[->,>=stealth', shorten >=1pt, auto, node distance=2cm,
  thick, main node/.style={circle,fill=gray!20, draw, minimum size=6mm, font=\small}, scale=0.8]

  \draw ({360/6 * (3 - 1)}:2cm) node[main node] (1) {$a_1$} [anchor=east] node {$(\{2\},\{1,2\})$~~~~};  
  \draw ({360/6 * (2 - 1)}:2cm) node[main node] (2) {$a_2$} [anchor=west] node {~~~$(\{1\},\{1,2\})$};
  \draw ({360/6 * (4 - 1)}:2cm) node[main node] (3) {$b_1$} [anchor=south east] node {$(\{2\},\{2\})$~~~};
  \draw ({360/6 * (5 - 1)}:2cm) node[main node] (4) {$b_2$} [anchor=east] node {$(\{1,2\},\{2\})$~~~~~};
  \draw ({360/6 * (1 - 1)}:2cm) node[main node] (5) {$c_1$} [anchor=south west] node {~~$(\{1\},\{1\})$};  
  \draw ({360/6 * (6 - 1)}:2cm) node[main node] (6) {$c_2$} [anchor=west] node {~~~$(\{1,2\},\{1\})$};  
  
  \path[every node/.style={font=\sffamily\small}]
    (1) edge  node {} (2)
        edge  node {} (3)
        edge node {} (4)
    (2) edge  node {} (1)
        edge  node {} (5)
        edge node {} (6)
    (3) edge  node {} (1)
        edge  node {} (4)
        edge node {} (2)
    (4) edge  node {} (3)
        edge  node {} (6)
        edge node {} (5)
    (5) edge  node {} (2)
        edge  node {} (6)
        edge node {} (1)
    (6) edge  node {} (4)
        edge  node {} (5)
        edge node {} (3);
\end{tikzpicture}
\end{center}
\caption{The digraph $H^2_2$ consists of $6$ vertices. 
} 
\label{fig:Graph_Hk_for_K=2}
\vspace{-0.4cm}
\end{figure}

The next lemma explains the role of the family of digraph $H^2_k$'s in studying the scalar index coding problem.
\begin{lemma}
If $\lind_2(G)\le k$ then $G\preccurlyeq H^2_k$, \ie, there exists a homomorphism $\phi$ from $\compl{G}$ to $\compl{H^2_k}$.
\end{lemma}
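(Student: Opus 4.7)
The plan is to read off the homomorphism $\phi$ directly from the encoding and decoding coefficients of a binary linear index code of length at most $k$. Label $V(G)=\{u_1,\dots,u_m\}$ and fix such a code; by padding with zero rows we may assume its length is exactly $k$, so the code is determined by a matrix $B\in\mbb{F}_2^{k\times m}$ with columns $\mathbf{b}_1,\dots,\mathbf{b}_m$ (the broadcast being $Bx$). A standard linear-algebra argument shows that because $B$ is linear, any valid (even non-linear) decoder at receiver $i$ forces the existence of a vector $\mathbf{c}_i\in\mbb{F}_2^k$ satisfying $\mathbf{c}_i^{\top}\mathbf{b}_i=1$ and $\mathbf{c}_i^{\top}\mathbf{b}_j=0$ for every $j\neq i$ with $j\notin N_G^+(u_i)$. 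In particular, both $\mathbf{c}_i$ and $\mathbf{b}_i$ are nonzero.

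With these vectors in hand, define $\phi(u_i):=v_{(I_i,J_i)}$, where $I_i,J_i\subseteq[1:k]$ are the (nonempty) supports of $\mathbf{c}_i$ and $\mathbf{b}_i$ respectively. Since $|I_i\cap J_i|\equiv \mathbf{c}_i^{\top}\mathbf{b}_i=1\pmod{2}$, we have $A_{(I_i,J_i)}=1$, so $(I_i,J_i)$ really is a vertex of $H^2_k$; hence $\phi\colon V(\compl{G})\to V(\compl{H^2_k})$ is well defined.

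To verify the edge condition, take any $(u_i,u_j)\in E(\compl{G})$; then $i\neq j$ and $j\notin N_G^+(u_i)$. The decoding identity gives $\mathbf{c}_i^{\top}\mathbf{b}_j=0$, so $|I_i\cap J_j|$ is even, i.e., $A_{(I_i,J_j)}=0$; by the edge rule of $H^2_k$ this means $(\phi(u_i),\phi(u_j))\notin E(H^2_k)$. The one subtle step---which I would flag as the main obstacle---is to exclude the degenerate case $\phi(u_i)=\phi(u_j)$, since otherwise the image would be a self-loop rather than an edge of the loopless graph $\compl{H^2_k}$. This is ruled out cleanly precisely because the label carries both coordinates: $\mathbf{c}_j^{\top}\mathbf{b}_j=1\neq 0=\mathbf{c}_i^{\top}\mathbf{b}_j$ forces $\mathbf{c}_i\neq \mathbf{c}_j$, hence $I_i\neq I_j$, hence $\phi(u_i)\neq\phi(u_j)$. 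Therefore $(\phi(u_i),\phi(u_j))\in E(\compl{H^2_k})$, and $\phi$ is the desired homomorphism. The efficacy of the construction hinges on packaging \emph{both} the decoding support $I_i$ and the encoding support $J_i$ into the vertex label, which is exactly why $H^2_k$ is indexed by pairs of subsets rather than by single subsets.
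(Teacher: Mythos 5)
Your proof is correct, and the construction of $\phi$ coincides with the paper's: the first coordinate of $\phi(u_i)$ records which broadcast symbols receiver $i$ combines to decode $x_i$, and the second records which broadcast symbols $x_i$ contributes to. The difference is one of packaging. The paper proceeds combinatorially: it partitions $[1:m]$ into the sets $C_J$ (precisely your $J_i$, the support of the $i$-th column of $B$) and into sets $D_I$ by fixing a \emph{minimal} sufficient family $\gamma(i)$ of transmissions for each receiver. Minimality is invoked there to guarantee that, over $\mbb{F}_2$, the decoder built from those transmissions must be the unweighted sum $\sum_{j\in I}y_j$, which is what makes $I=\gamma(i)$ play the role of your $I_i=$ support of $\mathbf{c}_i$. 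Your linear-algebraic phrasing sidesteps this detour: any decoding vector $\mathbf{c}_i$ (whose existence, given a linear encoder, is a standard duality fact) serves without any minimality requirement, and both well-definedness $A_{(I_i,J_i)}=1$ and the edge condition $A_{(I_i,J_j)}=0$ become the single parity identity $\mathbf{c}^{\top}\mathbf{b}\equiv|I\cap J|\pmod 2$. You also explicitly verify $\phi(u_i)\neq\phi(u_j)$ whenever $(u_i,u_j)\in E(\compl{G})$, which is needed since $\compl{H^2_k}$ is loopless; the paper's Lemma~\ref{lem:no_edge_in_G_no_edge_in_H} as stated only asserts $(\phi(v_i),\phi(v_{i'}))\notin E(H^2_k)$ and defers the details to the technical report, so your handling of that point is a small but genuine gain in rigor.
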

\begin{proof}
The proof of this lemma is by constructing $\phi$. Suppose that $\lind_2(G)\le k$. Therefore, there exists an index coding scheme that transmits $k$ binary messages. 

Suppose that the $j$-th transmitted message is $y_j=\sum_{i\in M_j} x_i$ for the sets $M_1,\ldots,M_k\subseteq [1:m]$ where $x_i$ is the variable associated to the node $v_i\in V(G)$.
For every subset $\varnothing\neq J\subseteq [1:k]$ define 
$C_J \triangleq \bigcap_{j\in J} M_j \setminus \bigcup_{l\notin J} M_l$.

In other words, $C_J$ consists of all the indices that belong to all of $M_j$'s with $j\in J$ but no other $M_l$. From elementary set theory, it is easy to observe that $C_J$'s are pairwise disjoint; (see Figure~\ref{fig:Demonstration_of_set_CJ}). Moreover, note that the union of $C_J$'s is the whole set of $[1:m]$. This is due to the fact that all $M_i\subseteq [1:m]$ and therefore $C_J$'s are also subset of $[1:m]$ and if some element in $[1:m]$ is missing in all the $C_J$'s, it is also missing in all the $M_i$'s. That is, there exists a vertex $v_i\in V(G)$ such that its corresponding variable does not appear in any $y_j$. Equivalently, in none of the transmitted messages
the variable $x_i$ contributes. But this is a contradiction since the vertex $v_i$ cannot recover its demand only from its side information. So, $C_J$'s are $2^k-1$ disjoint subsets of $[1:m]$ which cover the whole set $[1:m]$. In fact, each $C_J$ consists of some indices such that for every message $y_j$, either all the variables of the form $x_i$, $i\in C_J$ appear, or none of them appear in $y_j$. Hence, each message $y_j$ can be written as a summation of $x_{[C_J]}$ where $x_{[C_J]}\triangleq \sum_{i\in C_J} x_i$. Therefore, 
$y_j = \sum_{J\ni j} x_{[C_J]}$.

\begin{figure}
\begin{center}
\begin{tikzpicture}[filled/.style={ draw=black, thick}, outline/.style={draw=black, thick},
    Mnode/.style={font=\small}, scale=0.9]

    \draw[outline] (0,0) circle (1cm);
    \draw[outline] (0:1cm) circle (1cm);
    \draw (-0.5,0) node[Mnode] {$C_{\{1\}}$};
    \draw (+1.5,0) node[Mnode] {$C_{\{2\}}$};
    \draw (+0.5,0) node[Mnode] {$C_{\{1,2\}}$};
    \draw (-1.2,-0.5) node[Mnode]  {$M_1$};    
    \draw (+2.3,-0.5) node[Mnode]  {$M_2$};        
\end{tikzpicture}
\end{center}
\caption{The relation between sets $M_j$'s and sets $C_J's$.}
\label{fig:Demonstration_of_set_CJ}
\vspace{-0.45cm}
\end{figure}
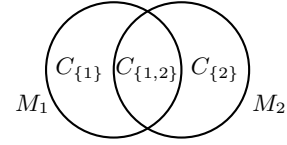

Next, we will define another partition of $[1:m]$ as follows. By definition of linear index coding, it is guaranteed that for every vertex $v_i$, the side information of $v_i$ and some subset of messages, transmitted by the source, will be enough for $v_i$ to recover $x_i$. The set of indices of every such subset of messages is called a \emph{sufficient family} for that particular receiver. Notice that for a particular receiver $v_i$, \emph{minimal} sufficient families\footnote{Here by a minimal sufficient family we refer to a sufficient family where none of its subsets is a sufficient family.} are not necessarily unique. For example, it is possible that a receiver can reconstruct its bit using the first two messages and its side information, or from the third message and its side information. However, there exists at least one minimal sufficient family of the messages. Let $\gamma:[1:m]\mapsto \mc{P}^\star([1:k])$ be a function that to every $i\in [1:m]$, $\gamma(i)$ assigns a minimal sufficient family for $v_i$. It is also easy to observe that if the set $\{y_j|j\in J\}$ is a minimal sufficient family of the messages for $v_i$ then $v_i$ is able to reconstruct $x_i$ using its side information and $\sum_{j\in J} y_j$.

For every $\phi\neq I\subseteq [1:k]$, define $D_I$ to be $\gamma^{-1}(I)$. Since $\gamma$ is a function, $D_I$'s for different $I$'s will partition the set $[1:m]$. Notice that by the definition of $\gamma$, if $i\in D_I$ then $\{y_j:j\in I\}$ is a minimal sufficient family for $v_i$. Therefore, $v_i$ can retrieve $x_i$ using its side information and $\sum_{j\in I} y_j$. That is to say that in the summation $\sum_{j\in I} y_j$, the variable $x_i$ appears and also, if another variable $x_{i'}$ appears, then $v_i$ knows $x_{i'}$ as its side information, \ie, $(v_i,v_{i'})\in E(G)$.

At this point, we are able to define a homomorphism from $\compl{G}$ to $\compl{H^2_k}$. In fact, we define a function $\phi:V(G)\mapsto V(H^2_k)$ and show that if $(v_i, v_{i'})\notin E(G)$ then $(\phi(v_i), \phi(v_{i'}))\notin E(H^2_k)$.

Since $C_J$'s and also $D_I$'s both partition the set $[1:m]$, for every $v_i\in V(G)$, there exists a unique pair of $(I,J)$, $\varnothing\neq I,J\subseteq [1:k]$ such that $i\in D_I$, $i\in C_J$. Define $\phi(v_i)=v_{(I,J)}$. In order to complete the proof, we need to show that $\phi$ is a well-defined function, \ie, $\phi(v)$ is a vertex of $H^2_k$ and also $(v_i, v_{i'})\notin E(G) \Rightarrow (\phi(v_i), \phi(v_{i'}))\notin E(H^2_k)$.

\begin{lemma}\label{lem:phi_is_well_defined}
The mapping $\phi$ is a well-defined function from $V(G)$ to $V(H^2_k)$, \ie, if $v_i\in V(G)$ then $A_{(I,J)}=1$ in which $I,J$ are such that $i\in D_I$ and $i\in C_J$.
\end{lemma}

\begin{lemma}\label{lem:no_edge_in_G_no_edge_in_H}
If $v_i,v_{i'}\in V(G)$ and $(v_i,v_{i'})\notin E(G)$ then $\left(\phi(v_i),\phi(v_{i'})\right) \notin E(H^2_k)$.
\end{lemma}
\end{proof}

So far, we have proved that if $\lind_2(G)\le k$ then $G\preccurlyeq H^2_k$. Conversely, if $G\preccurlyeq H^2_k$ then by Corollary~\ref{cor:MainResult_indRate_Homomorphism} we have $\lind_2(G)\le \lind_2(H^2_k)$. Therefore, the following lemma will finalize the proof of Theorem~\ref{thm:IndxCdPrblm_equiv_GrphHomomorphism}.

\begin{lemma}
For every positive integer $k$, $\lind_2(H^2_k) \le k$.
\end{lemma}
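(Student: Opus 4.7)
The plan is to exhibit an explicit binary linear index code of length $k$ for $H^2_k$, using the matrix $A$ itself as the guide for both the encoding and the decoding.

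First, I would index the transmissions by $j \in [1{:}k]$ and set
\[
y_j \;\triangleq\; \sum_{v_{(I,J)} \in V(H^2_k)\,:\, j \in J} x_{(I,J)},
\]
where $x_{(I,J)}$ denotes the source bit associated with the vertex $v_{(I,J)}$. Each $y_j$ is clearly an $\mathbb{F}_2$-linear combination of the source bits, so this gives a scalar binary linear code of length $k$.

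Next, I would prove that the vertex $v_{(I,J)}$ can decode $x_{(I,J)}$ by forming the sum $z_I \triangleq \sum_{j \in I} y_j$. Expanding and grouping by vertex, a term $x_{(I',J')}$ appears in $z_I$ with coefficient equal to $|I \cap J'| \pmod 2$. Therefore
\[
z_I \;=\; \sum_{(I',J')\,:\, |I \cap J'|\ \text{odd}} x_{(I',J')}.
\]
Since $v_{(I,J)}$ is a vertex of $H^2_k$, by definition $|I \cap J|$ is odd, so $x_{(I,J)}$ itself is one of the summands. For any other summand $x_{(I',J')}$ with $(I',J') \neq (I,J)$ and $|I \cap J'|$ odd, the condition $A_{(I,J')}=1$ is exactly the criterion for $(v_{(I,J)}, v_{(I',J')}) \in E(H^2_k)$, meaning $v_{(I',J')}$ is an out-neighbour of $v_{(I,J)}$ and its bit is known as side information. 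Hence
\[
x_{(I,J)} \;=\; z_I \;+\; \sum_{\substack{(I',J') \neq (I,J) \\ |I \cap J'|\ \text{odd}}} x_{(I',J')},
\]
and $v_{(I,J)}$ recovers $x_{(I,J)}$. Since $I \neq \varnothing$, the decoding sum $z_I$ is nonempty and well-defined.

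The whole argument is essentially a bookkeeping exercise once the right encoder is chosen; the only non-obvious step is guessing the encoder $y_j = \sum_{j \in J} x_{(I,J)}$ (rows of $B$ packed into columns of $A$) and noticing that the parity condition $|I \cap J'| \equiv 1 \pmod 2$ simultaneously governs (a) which vertices exist, (b) which pairs form edges, and (c) which coefficients appear after summing $y_j$ over $j \in I$. I do not anticipate any real obstacle beyond verifying that $I$ is nonempty so that the decoder has a valid expression, and that this construction produces exactly $k$ transmissions, giving $\lind_2(H^2_k) \le k$.
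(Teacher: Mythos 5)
Your proof is correct and is essentially the intended argument: the construction of $H^2_k$ is engineered precisely so that the column index $J$ of each vertex prescribes its encoding vector (the indicator of $J$, i.e.\ the $J$-th column of $B$) and the row index $I$ prescribes the decoding combination $z_I = \sum_{j\in I} y_j$, with the coefficient of $x_{(I',J')}$ in $z_I$ being exactly $A_{(I,J')} \in \{0,1\}$. You correctly observe that $A_{(I,J)}=1$ (so the desired bit survives), that every other surviving term $x_{(I',J')}$ has $A_{(I,J')}=1$, which is exactly the out-edge condition so that term is side information, and that $I\neq\varnothing$ makes $z_I$ a genuine linear combination. This is the same approach the paper points to; no gap.
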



\section{Application}\label{sec:Application}
In this section, we will demonstrate several applications of the theorems stated in the previous sections.

\subsection{Upper Bounds}
Here, we will show that some of the earlier upper bounds are only special cases of Corollary~\ref{cor:MainResult_indRate_Homomorphism}.

\begin{example}
One of the earliest upper bounds on the $\lind_2(G)$ is $\chi(\compl{G})$ where $\chi(\cdot)$ is the chromatic number of a graph (\eg, see \cite{AlonLubetStavWeinHassid-Focs08}). Notice that in our framework, this result is an immediate consequence of Corollary~\ref{cor:MainResult_indRate_Homomorphism}, Part~3. That is, if $\chi(\compl{G})=r$ then $\compl{G}\rightarrow K_r$ where $K_r$ is a complete graph with $r$ vertices. Thus, $\lind_2(G)\le \lind_2(\compl{K_r})=r=\chi(\compl{G})$.
\end{example}

\begin{example}
In \cite{BlaKleLub-CoRR10}, it is shown that $\lind(G)\le \chi_f(\compl{G})$ where $\chi_f(\cdot)$ is the \emph{fractional chromatic number} of a graph. See \cite{EbrahimiJafari-TechReport13} for the proof using our framework. 
\end{example}

The crucial observation is that the parameters $\chi(\compl{G})$ and $\chi_f(\compl{G})$ can be defined using existence of homomorphisms from $\compl{G}$ to the family of complete graphs and Kneser graphs, respectively. (See \cite{EbrahimiJafari-TechReport13} for the definition of Kneser graphs).

\subsection{Lower Bounds}
By using Theorem~\ref{thm:IndxCdPrblm_equiv_GrphHomomorphism}, the following result can be proved.

\begin{lemma}\label{lem:GenLowerBoundUsingMaximalElement}
Suppose that $h$ is an increasing function on \mbox{$(\set{G},\preccurlyeq)$} and $r$ is an upper bound on $h(H_k)$. For every digraph $G$, if $h(G)>r$ then $\lind_2(G) > k$.
\end{lemma}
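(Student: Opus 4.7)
The plan is to prove the contrapositive. Rather than attacking the statement $h(G) > r \Rightarrow \lind_2(G) > k$ directly, I would assume $\lind_2(G) \le k$ and derive $h(G) \le r$. This contrapositive form makes the role of Theorem~\ref{thm:IndxCdPrblm_equiv_GrphHomomorphism} transparent: that theorem tells us precisely that the graphs with $\lind_2(\cdot) \le k$ are exactly those sitting below $H^2_k$ in the pre order $\preccurlyeq$, and any monotone quantity on $(\set{G},\preccurlyeq)$ is then bounded by its value at the maximum element $H^2_k$.

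Concretely, I would carry out the argument in three short steps. First, assuming $\lind_2(G) \le k$, invoke Theorem~\ref{thm:IndxCdPrblm_equiv_GrphHomomorphism} with $q=2$ to obtain $G \preccurlyeq H^2_k$, i.e.\ a homomorphism $\compl{G} \to \compl{H^2_k}$ exists. Second, apply the hypothesis that $h$ is increasing on $(\set{G},\preccurlyeq)$ to conclude $h(G) \le h(H^2_k)$. Third, combine this with the assumed upper bound $h(H^2_k) \le r$ to get $h(G) \le r$, which contradicts $h(G) > r$ and therefore forces $\lind_2(G) > k$.

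Since the entire argument is a two-line chain of inequalities once Theorem~\ref{thm:IndxCdPrblm_equiv_GrphHomomorphism} is in hand, there is no real technical obstacle inside the proof itself. The only subtlety worth flagging is conceptual rather than computational: the usefulness of the lemma hinges on having an \emph{increasing} parameter $h$ on $(\set{G},\preccurlyeq)$ whose value on the specific graph $H^2_k$ can be controlled, so the ``hard part'' in applications will be verifying monotonicity under $\preccurlyeq$ and estimating $h(H^2_k)$ — not the proof of the lemma itself. In the paper's subsequent applications (such as extending the bound $\log_q(\chi(\compl{G})) \le \lind_q(G)$), the monotonicity of $\chi(\compl{\cdot})$ under $\preccurlyeq$ is exactly the homomorphism monotonicity of the chromatic number, so this step is standard once the right parameter is chosen.
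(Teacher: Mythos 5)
Your proof is correct and matches the paper's argument exactly: both assume $\lind_2(G) \le k$, invoke Theorem~\ref{thm:IndxCdPrblm_equiv_GrphHomomorphism} to get $G \preccurlyeq H^2_k$, apply monotonicity of $h$ to obtain $h(G) \le h(H^2_k) \le r$, and conclude by contradiction.
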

\begin{proof}
If $\lind_2(G)\le k$ then by Theorem~\ref{thm:IndxCdPrblm_equiv_GrphHomomorphism}, $G\preccurlyeq H_k$ and therefore $h(G)\le h(H_k)\le r$ which is a contradiction.
\end{proof}

Lemma~\ref{lem:GenLowerBoundUsingMaximalElement} is a powerful tool to find lower bounds on the index coding problem. Actually for every increasing function $h$ on $(\set{G},\preccurlyeq)$ we have one lower bound on the index coding problem. In the next theorem, we provide a lower bound on $\lind_q(G)$ in terms of the chromatic number of $\compl{G}$.

\begin{theorem}
For every digraph $G$, $\lind_q(G)\ge \log_q(\chi(\compl{G}))$.
\end{theorem}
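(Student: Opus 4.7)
The plan is to apply Lemma~\ref{lem:GenLowerBoundUsingMaximalElement} in its natural $q$-ary form (obtained from Theorem~\ref{thm:IndxCdPrblm_equiv_GrphHomomorphism} by the same argument as the binary case) with the increasing parameter $h(G)\triangleq\log_q(\chi(\compl{G}))$. Two things must be checked: (i) that $h$ really is increasing on $(\set{G},\preccurlyeq)$, and (ii) that $h(H^q_k)\le k$, so that $k$ itself serves as the upper bound $r$ in the lemma.

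For (i), if $G\preccurlyeq H$ then by definition there is a homomorphism $\phi:\compl{G}\mapsto\compl{H}$. Composing $\phi$ with any proper coloring $c$ of $\compl{H}$ gives a proper coloring $c\circ\phi$ of $\compl{G}$: for every $(u,v)\in E(\compl{G})$ edge-preservation forces $(\phi(u),\phi(v))\in E(\compl{H})$, and hence $c(\phi(u))\ne c(\phi(v))$. Therefore $\chi(\compl{G})\le\chi(\compl{H})$, and monotonicity of $\log_q$ yields $h(G)\le h(H)$.

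For (ii), the idea is to exhibit a small clique cover of $H^q_k$, because cliques in $H^q_k$ are exactly independent sets in $\compl{H^q_k}$. In the $q=2$ construction recalled in \S\ref{sec:EquivFormLinScalBinarIndexCoding}, partition $V(H^2_k)$ by the second coordinate $J$: for any two vertices $v_{(I,J)}$ and $v_{(I',J)}$ in the same class, the edge condition of $H^2_k$ reduces to $A_{(I,J)}=A_{(I',J)}=1$, which holds in both directions by the very definition of the vertex set. Hence each class is a clique, and since there are $2^k-1$ non-empty column indices one obtains $\chi(\compl{H^2_k})\le 2^k-1<2^k$. The analogous column partition in the general-$q$ construction of \cite{EbrahimiJafari-TechReport13} yields $\tfrac{q^k-1}{q-1}\le q^k$ cliques, so $h(H^q_k)=\log_q(\chi(\compl{H^q_k}))\le k$.

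Putting the pieces together, instantiate with $k\triangleq\lind_q(G)$: Theorem~\ref{thm:IndxCdPrblm_equiv_GrphHomomorphism} supplies $G\preccurlyeq H^q_k$, and then (i) and (ii) chain to
\[
\log_q(\chi(\compl{G}))=h(G)\le h(H^q_k)\le k=\lind_q(G),
\]
which is exactly the claim. The only mild obstacle I anticipate is transporting the column-indexed clique cover from $q=2$ to general $q$, since the main text only details $H^q_k$ for the binary case; however, this cover is built into the construction by design, as the columns of the underlying incidence matrix partition the vertices into $\tfrac{q^k-1}{q-1}$ cliques, which is all that step~(ii) requires.
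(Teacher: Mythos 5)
Your proof is correct and follows the same structure as the paper's: both take $h(G)=\chi(\compl{G})$ (equivalently its $\log_q$), invoke monotonicity of $h$ under the pre-order $\preccurlyeq$, and conclude via $G\preccurlyeq H^q_k$ and the bound $\chi(\compl{H^q_k})\le q^k$. The one place you go further than the paper's text is in supplying the column-indexed clique-cover argument for $\chi(\compl{H^q_k})\le q^k$ (the paper defers this to the technical report), and your argument there is consistent with the paper's own remark that the $1$'s in each column of $A$ form a clique in $H^q_k$.
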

\begin{proof}
The function $h(G)=\chi(\compl{G})$ is an increasing function on $(\set{G},\preccurlyeq)$. Suppose that $\lind_q(G)=k$. Therefore,  $\chi(\compl{G}) \leq \chi(\compl{H^q_k}) \leq q^k = q^{\lind_q(G)}$. The first inequality is implied by the previous Lemma. For a proof of the second inequality see \cite{EbrahimiJafari-TechReport13}. 
\end{proof}


\subsection{Index Codes and Change of Field Size}

Existence of a certain index code for a given graph over a fixed finite field is equivalent to the existence of linear combinations of the source messages over the ground field with certain Algebraic / Combinatorial constraints. If the ground field is changed, there is no natural way of updating the index code over the new field. In other words, if for a fixed graph $G$, an index code over a finite field $\mathbb{F}_{q_1}$ is given, there is no natural way to construct some index code for the same graph but over a different field $\mathbb{F}_{q_2}$. In fact, in \cite{LubetStav-IT09}, it has been shown that for every pair of finite fields $\mathbb{F}_p$ and $\mathbb{F}_q$ of different characteristics and for every $0 < \epsilon < 0.5$, there exists a graph $G$ with $n$ vertices such that $\lind_{p}(G) < n^{\epsilon} , \lind_q(G)>n^{1-\epsilon}$.  

Here we use the results of Theorem 1 and Theorem 2 to show that if $\lind_p(G)$ is less than a fixed number then by changing the field size, the corresponding linear indices can at most differ by a factor that depends only on the field sizes and is independent from the size of the graph. More precisely the following result holds:

\begin{theorem}
Let $G$ be a graph and $q_1,q_2$ are two different prime powers. Then, $\lind_{q_2}(G)\leq \lind_{q_2} (H^{q_1}_{\lind_{q_1}(G)})$.
\end{theorem}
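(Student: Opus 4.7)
The plan is to derive the theorem as a direct two-step combination of the two main structural results established earlier in the paper, exploiting the fact that the maximal graphs $H^{q_1}_k$ act as universal targets for the pre-order $\preccurlyeq$ regardless of which field one later wishes to code over.

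First I would set $k := \lind_{q_1}(G)$, so that by definition $G \in \set{G}^{q_1}_k$. Applying Theorem~\ref{thm:IndxCdPrblm_equiv_GrphHomomorphism} with the field $\mbb{F}_{q_1}$ then yields a homomorphism $\phi : \compl{G} \to \compl{H^{q_1}_{k}}$, i.e.\ $G \preccurlyeq H^{q_1}_{k}$. It is crucial here that Theorem~\ref{thm:IndxCdPrblm_equiv_GrphHomomorphism} gives an \emph{if and only if} characterisation of $\lind_{q_1}(G) \le k$ through the pre-order $\preccurlyeq$, so the homomorphism exists and depends only on the structure of $\compl{G}$ and of $\compl{H^{q_1}_{k}}$, not on any field-specific decoding data.

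Second, and this is the key point that makes the theorem non-trivial, the relation $\preccurlyeq$ is a combinatorial object: once the homomorphism $\phi$ is in hand, it can be fed into Corollary~\ref{cor:MainResult_indRate_Homomorphism} with the \emph{other} prime power $q_2$. Part~3 of that corollary then gives
\begin{equation*}
\lind_{q_2}(G) \;\le\; \lind_{q_2}\!\left(H^{q_1}_{k}\right) \;=\; \lind_{q_2}\!\left(H^{q_1}_{\lind_{q_1}(G)}\right),
\end{equation*}
which is exactly the desired inequality. Conceptually, the first step encodes all the $q_1$-linear coding structure of $G$ into a \emph{single} graph $H^{q_1}_{k}$ that does not remember $q_1$ except through its definition, and the second step re-decodes this combinatorial information over $\mbb{F}_{q_2}$.

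The only potential obstacle is foundational rather than technical: Theorem~\ref{thm:IndxCdPrblm_equiv_GrphHomomorphism} is proved in the excerpt only for $q=2$, while the general $q$ case is deferred to the companion technical report. For the present proof I would simply invoke that deferred general statement, noting that both invocations (for $q_1$ and for $q_2$) are used as black boxes. No further case analysis on the characteristics of the two fields is needed, which is what makes the bound genuinely independent of $|V(G)|$: the right-hand side depends on $G$ only through the integer $\lind_{q_1}(G)$ and on the fixed graph $H^{q_1}_{\lind_{q_1}(G)}$, whose size is a function of $q_1$ and $\lind_{q_1}(G)$ alone.
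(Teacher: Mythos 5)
Your proof is correct and takes essentially the same two-step approach as the paper: set $k=\lind_{q_1}(G)$, invoke Theorem~\ref{thm:IndxCdPrblm_equiv_GrphHomomorphism} to obtain $G\preccurlyeq H^{q_1}_k$, then apply the monotonicity of $\lind_{q_2}(\cdot)$ under $\preccurlyeq$ (you cite Corollary~\ref{cor:MainResult_indRate_Homomorphism} Part~3, the paper cites Theorem~\ref{thm:MainResult_indRate_Homomorphism} directly, but these are the same fact). Your added remarks about $\preccurlyeq$ being field-agnostic and about the deferred general-$q$ case accurately reflect what the paper relies on.
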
  
\begin{proof}
Suppose that $\lind_{q_1}(G) =k$. By Theorem 2, $G \preccurlyeq H^{q_1}_k$ and then by Theorem 1, $\lind_{q_2}(G) \leq \lind_{q_2} (H^{q_1}_k)$.
\end{proof}



\bibliographystyle{IEEEtran}
\bibliography{index_coding}

\end{document}